\newcommand*{\Otilde}{\widetilde{O}}
\newcommand*{\OHmega}{\mathrm{\Omega}}
\newcommand*{\nwspace}{\hspace*{.1em}}
\let\oldsqrt\sqrt
\def\hksqrt{\mathpalette\DHLhksqrt}
\def\DHLhksqrt#1#2{\setbox0=\hbox{$#1\oldsqrt{#2\,}$}\dimen0=\ht0
   \advance\dimen0-0.2\ht0
   \setbox2=\hbox{\vrule height\ht0 depth -\dimen0}
   {\box0\lower0.4pt\box2}}
\renewcommand\sqrt\hksqrt
\renewcommand{\leq}{\leqslant}
\renewcommand{\geq}{\geqslant}
\renewcommand{\le}{\leqslant}
\renewcommand{\ge}{\geqslant}
\newcommand{\eps}{\varepsilon}
\newcommand*{\G}{\mathcal{G}}
\renewcommand*{\P}{\mathcal{P}}
\begin{document}

\title{Compact Distance Oracles with Large Sensitivity and Low Stretch}

\author{Davide Bilò\inst{1} \and
	Keerti Choudhary\inst{2} \and
	Sarel Cohen\inst{3} \and
	Tobias Friedrich\inst{4} \and\\
	Simon Krogmann\inst{4} \and
	Martin Schirneck\inst{5}
}

\authorrunning{D.~Bilò et al.}

\institute{Department of Information Engineering, Computer Science and Mathematics, 
	University of L'Aquila, Italy \email{davide.bilo@univaq.it}\and
	Department of Computer Science and Engineering,\\
	Indian Institute of Technology Delhi, India \email{keerti@iitd.ac.in} \and
	School of Computer Science, The Academic College of Tel Aviv-Yaffo, 
	Israel \email{sarelco@mta.ac.il} \and
	Hasso Plattner Institute, University of Potsdam, Germany
	\email{firstname.lastname@hpi.de} \and
	Faculty of Computer Science, University of Vienna, Austria
	\email{martin.schirneck@univie.ac.at}
}

\maketitle              

\begin{abstract}
An \emph{$f$-edge fault-tolerant distance sensitive oracle} ($f$-DSO) with stretch $\sigma \geq 1$ 
is a data structure that preprocesses an input graph $G$.
When queried with the triple $(s,t,F)$, where $s, t \in V$ and $F \subseteq E$ contains at most $f$ edges of $G$, the oracle returns an estimate $\widehat{d}_{G-F}(s,t)$ of the distance $d_{G-F}(s,t)$ between $s$ and $t$ in the graph $G-F$ such that $d_{G-F}(s,t) \leq \widehat{d}_{G-F}(s,t) \leq \sigma \cdot d_{G-F}(s,t)$.

For any positive integer $k \ge 2$ and any $0 < \alpha < 1$, 
we present an $f$-DSO with sensitivity $f = o(\log n/\log\log n)$, 
stretch $2k-1$, space $O(n^{1+\frac{1}{k}+\alpha+o(1)})$, 
and an $\Otilde(n^{1+\frac{1}{k} - \frac{\alpha}{k(f+1)}})$
query time.

Prior to our work, there were only three known $f$-DSOs with subquadratic space.
The first one by Chechik et al.~[Algorithmica 2012]
has a stretch of $(8k-2)(f+1)$, depending on $f$. 
Another approach is storing an \emph{$f$-edge fault-tolerant $(2k{-}1)$-spanner} of $G$.
The bottleneck is the large query time due to the size of any such spanner,
which is $\OHmega(n^{1+1/k})$ under the Erd\H{o}s girth conjecture.
Bilò et al.~[STOC 2023] gave a solution with stretch $3+\varepsilon$,
query time $O(n^{\alpha})$ but space $O(n^{2-\frac{\alpha}{f+1}})$,
approaching the quadratic barrier for large sensitivity.

In the realm of subquadratic space, our $f$-DSOs are the first ones that guarantee, at the same time, large sensitivity, low stretch, and non-trivial query time.
To obtain our results, we use the approximate distance oracles of Thorup and Zwick [JACM 2005],
and the derandomization of the $f$-DSO of Weimann and Yuster [TALG 2013] 
that was recently given by Karthik and Parter [SODA 2021].

\keywords{Approximate shortest paths \and Distance sensitivity oracle \and Fault-tolerant data structure \and Spanner \and Subquadratic space.}
\end{abstract}

\setcounter{footnote}{0}

\section{Introduction}
\label{sec:info}

There are applications, like routing on edge devices,
where we want to quickly find out the distances
between pairs of vertices, but we cannot store the entire graph topology due to memory restrictions.
This problem is solved by a class of data structures called \emph{distance oracles} (DO).
Typically, not a single structure serves every use case and constructions
need to provide reasonable trade-offs between the space requirement, query time, and stretch,
that is, the quality of the estimated distance.
We are interested in the fault-tolerant setting.
Here, the data structure must additionally be able to tolerate multiple edge failures in the underlying graph. 
An \emph{$f$-edge fault-tolerant distance sensitivity oracles} \mbox{($f$-DSO)} for a graph $G = (V,E)$ is able to report,
for any two $s,t \in V$ and set $F \subseteq E$ of at most $f$ failing edges,
an estimate $\widehat{d}_{G-F}(s,t)$ of the \emph{replacement distance} $d_{G-F}(s,t)$
in the graph $G-F$. 
The parameter $f$ is the \emph{sensitivity} of the oracle.
We say the \emph{stretch} of the data structure is $\sigma$ if $d_{G-F}(s,t) \leq \widehat{d}_{G-F}(s,t) \leq \sigma \cdot d_{G-F}(s,t)$, for any admissible query $(s,t,F)$. 

Several $f$-DSOs with different space-stretch-time trade-offs have been designed in the last decades, most of which can only handle a very small number $f \le 2$ of failures \cite{BaswanaK13,BeKa08,BeKa09,BCFS21SingleSourceDSO_ESA,ChCo20,DemetrescuT02,DeThChRa08,DuanP09a,GrandoniVWilliamsFasterRPandDSO_journal,GuRen21,Ren22Improved}. We highlight those with sensitivity $f \geq 3$. 
The $f$-DSO of Duan and Ren \cite{DuRe22} requires $O(fn^4)$ space,\footnote{%
	The space is measured in the number of machine words on $O(\log n)$ bits.
}
returns exact distances in $f^{O(f)}$ query time, but the preprocessing algorithm that builds it 
requires exponential-in-$f$ time $n^{\OHmega(f)}$. 
The data structure by Chechik, Cohen, Fiat, and Kaplan~\cite{ChCoFiKa17} is more compact,
requiring $O(n^{2+o(1)}\log W)$ space, and can be preprocessed in time\footnote{%
	For a non-negative function $g(n)$,
	we use $\Otilde(g)$ to denote $O(g \cdot \textsf{polylog}(n))$.
} 
$\Otilde(n^{5+o(1)}\log W)$, where $W$ is the weight of the heaviest edge of $G$.
The oracle has stretch $1+\varepsilon$, for any constant $\varepsilon > 0$, with an $O(f^5\log n \log \log W)$ query time, and handles up to $f = o(\log n / \log \log n)$ failures. Finally, the $f$-DSO of Chechik, Langberg, Peleg, and Roditty~\cite{CLPR12} requires a subquadratic space of $O(fkn^{1+1/k}\log(nW))$, where $k\geq 1$ is an integer parameter, and has a fast query time of $\Otilde(f \log\log d_{G-F}(s,t))$, but guarantees only a stretch of $(8k-2)(f+1)$ that depends on the number $f$ of failures.

Another common way to provide approximate replacement distances 
in the presence of transient edge failures are fault-tolerant spanners~\cite{Levcopoulos98FaultTolerantGeometricSpanners}.
An \emph{$f$-edge fault-tolerant spanner with stretch $\sigma$ 
(fault-tolerant $\sigma$-spanner)} 
is a subgraph $H$ of $G$ such that
$d_{H-F}(s,t) \le \sigma \cdot d_{G-F}(s,t)$, for every suitable triple $(s,t,F)$, 
with $|F| \le f$.
For any positive integer $k$, 
Chechik, Langberg, Peleg, and Roditty~\cite{Chechik10FaultTolerantSpannersGeneralGraphs}
gave an algorithm computing a fault-tolerant $(2k{-}1)$-spanner with $O(f n^{1+1/k})$ edges.
This was recently improved by Bodwin, Dinitz, and Robelle 
by reducing the size to  $O(f^{1-1/k} n^{1+1/k})$~\cite{BodwinDinitzRobelle21OptimalVFTSpanners}
and eventually to $f^{1/2}n^{1+1/k} \cdot \poly(k)$ for any even $k$
and $f^{1/2-1/(2k)}n^{1+1/k} \cdot \poly(k)$ for odd 
$k$~\cite{BodwinDinitzRobelle22PartiallyOptimalEFTSpanners}.
The authors of the last work also show almost matching lower bounds of $\OHmega(f^{1/2-1/(2k)} n^{1+1/k} + fn)$ for general $k > 2$, 
and $\OHmega(f^{1/2}n^{3/2})$ for $k = 2$
assuming the Erdős girth conjecture~\cite{Erd64extremal}.

The main problem with the spanner approach is the high query time. 
In fact, to retrieve the approximate distance between a given pair of vertices,
one has to compute the single-source distance from one of them, say with Dijkstra's algorithm,
in time that is at least linear in the size of the spanner.
Therefore, an important problem in the field of fault-tolerant data structures
to design $f$-DSOs with subquadratic space, 
that simultaneously guarantee a non-trivial $o(n^{1+1/k})$ query time, a low stretch of $2k-1$,
and a large sensitivity $f$.\footnote{%
	Subquadratic space $f$-DSOs with stretch $2k-1$ can only exist for $k\geq 2$. 
	There is an $\OHmega(n^2)$-bit lower bound for exact $f$-DSOs,
	regardless of the query time~\cite{ThorupZ05}.
}

Very recently, Bilò, Chechik, Choudhary, Cohen, Friedrich, Krogmann, and Schirneck~\cite{Bilo23SubquadraticSpaceSTOC} addressed the same problem.
They presented, for all \mbox{$\varepsilon > 0$}, and constants $1/2 > \alpha > 0$, and $f$,
a $(3{+}\varepsilon)$-approximate $f$-DSO for unweighted graphs
taking space $\Otilde_{\varepsilon}(n^{2-\frac{\alpha}{f+1}} (\log n)^{f+1})$
and  has a query time of $\Otilde_{\varepsilon}(n^{\alpha})$.
While their query time is sub-linear, their space converges to quadratic for large sensitivity.

In contrast, we design a deterministic oracle for weighted graphs
that can handle up to $f = o(\log n/\log\log n)$ edge failures and 
provides a trade-off between stretch, space, and query time.
Namely, for any positive integer $k \ge 2$ and constant $1-\frac{1}{k} > \alpha > 0$,
our data structure has stretch $2k-1$,
requires $kn^{1+\alpha + \frac{1}{k} +o(1)}$ space, and can be queried in time $\Otilde(n^{1 + \frac{1}{k} -\frac{\alpha}{k(f+1)}})$. 
The query time improves to $\poly(D, f, \log n)$ for graphs in which the pair-wise hop distances are bounded by $D$.
If, for example, $D$ is polylogarithmic, the query time is as well.
We note that the query time of our $f$-DSO for general graphs is $\OHmega(n)$
for all choices of $\alpha$.

Both \cite{Bilo23SubquadraticSpaceSTOC} and this work approach
the problem by handling hop-short and hop-long paths separately,
as is common in the area~\cite{GrandoniVWilliamsFasterRPandDSO_journal,WY13},
and use the distance oracle of Thorup and Zwick~\cite{ThorupZ05} on the lowest level.
Apart from that, the techniques are different.
We highlight ours below.

\vspace*{.5em}
\noindent
\textbf{Our techniques.}
Our $f$-DSO for bounded hop diameter
is the result of combining the approximate distance oracles of Thorup and Zwick~\cite{ThorupZ05} 
with randomized replacement path covering (RPC), a collection of certain subgraphs of $G$,
introduced by Weimann and Yuster~\cite{WY13}.
Such coverings are very large, even larger than the underlying graph itself.
They are thus unusable when emphasizing subquadratic space,
barring additional processing.
The main issue when compressing an RPC is retaining the information
which subgraph is relevant for which query.
We provide two different ways to solve this.
One is based on the idea of using sparse spanners as proxies for the subgraphs in the covering, 
and the other one uses the recent derandomization technique of Karthik and Parter~\cite{KarthikParter21DeterministicRPC}.
To lift this to an arbitrary hop diameter,
we borrow from fault-tolerant spanners.
There, a single graph is constructed up front to handle all queries.
To achieve a compact oracle with a query time better than any spanner,
we instead turn this process around and 
use the hop-short $f$-DSO to combine only the subgraphs we need.

\vspace*{.5em}
\noindent
\textbf{Other related work.}
Demetrescu and Thorup \cite{DemetrescuT02} designed the first exact $1$-DSO for directed edge-weighted graphs with $O(n^2 \log n)$ space and $O(\log n)$ query time. Demetrescu, Thorup, Chowdhury, and Ramachandran~\cite{DeThChRa08} improved the query time to $O(1)$ and generalized the oracle to handle also a single vertex failure. Later, in two consecutive papers, Bernstein and Karger improved the preprocessing time from $\Otilde(mn^2)$ to $\Otilde(mn)$~\cite{BeKa08,BeKa09}.  Khanna and Baswana~\cite{KB10} designed $1$-DSO for unweighted graphs having size $O(k^5 n^{1+1/k} \frac{\log^3 n}{\eps^4})$, a stretch of $(2k-1)(1+\eps)$, and $O(k)$ query time. The problem of $1$-DSO was also studied with a special focus on the preprocessing time \cite{ChCo20,GrandoniVWilliamsFasterRPandDSO_journal,GuRen21,Ren22Improved,WY13}.

For the case of multiple failures, other than the results we explicitly mentioned in the introduction~\cite{ChCoFiKa17,CLPR12,DuRe22}, it is worth mentioning the $2$-DSO of Duan and Pettie~\cite{DuanP09a} with $O(n^2 \log^3 n)$ size and $O(\log n)$ query time and the work by van~den~Brand and Saranurak~\cite{BrSa19}.

\vspace*{.5em}
\noindent
\textbf{Outline.}
\Cref{sec:overview} provides an overview of our approach and presents our results.
The preliminaries and notation needed to follow the technical part are given in \Cref{sec:prelims}.
In \Cref{sec:small_diameter}, we first describe the randomized subquadratic-space $f$-DSO for short hop distances and then derandomize it, not only to obtain a deterministic construction but also to accelerate the query time to $\poly(D, f, \log n)$. 
\Cref{sec:large_diameter} then describes how to use this
to develop a deterministic subquadratic-space $f$-DSO also for hop-long replacement paths.

Some of the proofs are deferred to \Cref{app:omitted_proofs} due to space reasons.

\section{Overview}
\label{sec:overview}

Our first goal is to develop an $f$-DSO whose space is subquadratic in $n$, 
provided that the hop diameter\footnote{%
	The \emph{hop diameter} of a weighted graph is the minimum integer $D$ 
	such that all shortest paths between pairs of vertices have at most $D$ edges.
} 
$D$ and the sensitivity $f$ are not too large.
One of the first DSOs was given by Weimann and Yuster~\cite{WY13}.
It reports exact distances but, on graphs with a large hop diameter,
it is too large and too slow.
We first give an overview of their techniques and
then describe the steps we take to reduce the space as well as the query time using approximation.

Given the graph $G = (V,E)$ as well as positive integers $L$ and $f$,
the DSO in~\cite{WY13} samples a family $\G$ of $\Otilde(fL^f)$ random spanning subgraphs of $G$,
that is, all the subgraphs have the same vertex set $V$.
Each graph $G_i \in \G$ is generated by removing each edge of $G$ with probability $1/L$.
With high probability,\footnote{%
	An event occurs \emph{with high probability} (w.h.p.) if it has success probability 
	at least $1- n^{-c}$ for some constant $c >0$.
	In fact, $c$ can always be made arbitrarily large 
	without affecting the asymptotics.
} for all vertices $s,t \in V$ and sets $F \subseteq E$ of at most $f$ edge failures,
if there is a replacement path from $s$ to $t$ that has at most $L$ edges and none of them is in $F$, 
then there exists a subgraph $G_i \in \G$ that does not contain any edge of $F$ but such an replacement path.
Let $\G_F \subseteq \G$ be the subfamily of all the $G_i$ in which at least all of $F$ was removed.
In other words, if $s$ and $t$ have a \emph{hop-short} shortest path in $G-F$,
at least one of their replacement paths survives in a graph in $\G_F$.

To handle hop-short replacement paths, it is enough to go over the subgraphs
and report the minimum distance $d_{G_i}(s,t)$ over all $G_i \in \G_F$.
For the \emph{hop-long} replacement paths on more than $L$ edges,
a random subset $B \subseteq V$ of $\Otilde(fn/L)$ of \emph{pivots} is sampled.
This way any hop-long replacement path decomposes into short subpaths
such that both endpoints are in $B$.
To answer a hop-long query $(s,t,F)$, a dense weighted graph $H^F$ is created on the vertex set
$V(H^F) = B \cup \{s,t\}$ such that for any two $u,v \in V(H^F)$ the edge $\{u,v\}$
has weight $\min_{G_i \in \G_F} d_{G_i}(u,v)$.
Those edges represent the subpaths.
The oracle's eventual answer to the query is the distance $d_{H^F}(s,t)$ in $H^F$.

The replacement distances reported by the DSO are exact w.h.p.
However, this approach has several drawbacks.
The most important one for us is that each of the graphs $G_i$ has $\OHmega(m)$ edges,
raising the space
to store them all to $\OHmega(fL^f m)$, which is super-quadratic in $n$ for dense graphs $G$.
Also, the query time is rather high,
the bottleneck is computing the weight of the $O(|B|^2) = \Otilde(f^2 n^2/L^2)$ edges of $H^F$ for the hop-long paths.

The key observation for improving this result in graphs with a small hop diameter is
that there \emph{all} replacement paths are hop-short.
Afek, Bremler-Barr, Kaplan, Cohen, and Merritt~\cite{Afek02RestorationbyPathConcatenation_journal} showed that 
for undirected, weights graphs $G$ and failure sets $F \subseteq E$ with $|F| \le f$,
every shortest path in $G-F$ is a concatenation of at most $f+1$ shortest paths in $G$
interleaved with at most $f$ edges.
So if $D$ is a bound on the hop diameter of $G$, the hop diameter of $G-F$ is at most $L = (f{+}1)D+f$. 
With this definition of $L$, we can safely ignore hop-long replacement paths.
Note that the assumption of $G$ being undirected is essential here:
The Afek et al.~\cite{Afek02RestorationbyPathConcatenation_journal} result fails in directed graphs.
Moreover, there is no hope for a subquadratic DSO in that case.
Thorup and Zwick~\cite{ThorupZ05} showed that every data structure
reporting pairwise distances in a directed graph must take $\OHmega(n^2)$ space.
This holds even if the data structure does not support a single edge failure and only provides an arbitrary finite approximation of the distance.

Nevertheless, we can use approximation in order to reduce the space of the DSO for undirected graphs.
Instead of storing the subgraphs $G_i$, we replace them by the \emph{distance oracle} (DO)
of Thorup and Zwick~\cite{ThorupZ05}.
For any positive integer $k$ and $G_i$, we get a DO of size $O(kn^{1+1/k})$ that,
when queried with two vertices $s,t$, reports the distance $d_{G_i}(s,t)$
but with a \emph{stretch} of $2k-1$.
That means, the returned value $\widehat{d}(s,t)$ satisfies
$d_{G_i}(s,t) \le \widehat{d}(s,t) \le (2k{-}1) \cdot d_{G_i}(s,t)$.
While the use of more efficient data structures
reduces the space of our DSO to $\Otilde(fL^f n^{1+1/k})$,
discarding the actual subgraphs $G_i$ makes it impossible to recover
the information which edges have been removed in which graph, that is,
to compute the subfamily $\G_F$.
We provide two different ways to solve this.
The first one is to use spanners.
The DO in~\cite{ThorupZ05} is accompanied by a spanner of the same size
and we show that if the spanner associated with $G_i$ does not contain
an edge of $F$ then it is safe to rely on $G_i$ for the replacement distances,
even if the graph itself has some failing edges from $F$.

Interestingly, the other solution comes from derandomization.
Karthik and Parter~\cite{KarthikParter21DeterministicRPC} showed 
how to make the subgraph creation deterministic,
albeit now with $O((cfL \log m)^{f+1})$ such graphs for some constant $c > 0$.
This makes the resulting DSO less compact and also increases the preprocessing time.
However, they presented a way to compute the now deterministic family $\G_F$
using error-correcting codes.
This allows us to significantly improve the query time if the diameter is small.
For this, we show how to implement the encoding procedure without using additional storage space.

We present our results in the following setting.
We consider graphs with \emph{polynomial edge weights},
meaning that they are edge-weighted by positive reals from a range of size $\poly(n)$,
where $n$ is the number of vertices.
While the weights themselves may have arbitrary precision,
the number of values that can be written as sums of at most $n$ weights is again
polynomial.
Therefore, we can encode any graph distance in a constant number of $O(\log n)$-bit machine words.
The restriction on the range is justified as follows.
Let $W = \max_{e \in E} w(e)/\min_{e \in E} w(e)$ be the ratio between the maximum and minimum weight.
Chechik et al.~\cite[Lemma~4.1]{ChCoFiKa17} gave a reduction from approximate DSOs for general weighted graphs
to approximate DSOs for graphs with polynomial weights
that increases the space and preprocessing time only by a factor $O(\frac{\log W}{\log n})$,
the query time by a factor $O(\log\log W)$, and the stretch by a factor $1+\frac{1}{n}$.

In the statements below, $k$ controls the stretch vs.\ space trade-off
is an arbitrary positive integer, possibly even depending on the number of vertices $n$.
However, there are only space improvements to be had for values $k = O(\log n)$.

\begin{restatable}{theorem}{smalldiameter}
\label{thm:small_diameter}
	Let $G = (V,E)$ be an undirected graph with 
	polynomial edge weights, and hop diameter $D$.
	For all positive integers $k$ and $f = o(\log n/\log\log n)$,
	there is an $f$-DSO for $G$ that 
	has stretch $2k-1$ and satisfies the following properties.
	\begin{enumerate}
		\item (Randomized.) The DSO takes space $D^f kn^{1+\frac{1}{k} +o(1)}$,
			has a preprocessing time of $D^f kmn^{\frac{1}{k} +o(1)}$,
			and answers queries correctly w.h.p.\ in time $D^f n^{o(1)}$.\vspace*{.25em}
		\item (Deterministic.) The DSO takes space $D^{f+1} kn^{1+\frac{1}{k} +o(1)}$,
			has preprocessing time of $D^{f+1} kmn^{\frac{1}{k} +o(1)}$,
			and query time $O(f^3 D \nwspace \frac{\log n \log\log n}{\log D})$.
	\end{enumerate}
\end{restatable}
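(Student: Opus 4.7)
The plan is to follow the blueprint sketched in the overview: combine the Weimann--Yuster random subgraph covering with the Thorup--Zwick approximate distance oracle, using the companion TZ spanner as a proxy for the availability of each subgraph under failures.

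First, I would set $L = (f{+}1)D + f$, which, by the Afek et al.~decomposition, bounds the hop-length of every shortest path in $G - F$ for every failure set of size at most $f$. I would then sample $N = \Otilde(f L^f) = D^f \cdot n^{o(1)}$ subgraphs $G_1, \dots, G_N$, each obtained by independently deleting every edge of $G$ with probability $1/L$. The standard Weimann--Yuster analysis yields that, with high probability, for every admissible query $(s,t,F)$ there is an index $i$ such that $G_i \cap F = \emptyset$ and a shortest replacement path of $(s,t,F)$ is entirely contained in $G_i$. For each $G_i$ I would store the TZ $(2k{-}1)$-approximate distance oracle together with its associated $(2k{-}1)$-spanner $H_i$, both of size $O(k n^{1+1/k})$. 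The total space is then $N \cdot O(k n^{1+1/k}) = D^f k n^{1+1/k + o(1)}$, and the preprocessing time follows analogously from the $O(k m n^{1/k})$ cost of building each TZ structure.

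To answer a query $(s, t, F)$, I would iterate through the $N$ subgraphs and, for each $i$, check via a hash lookup on the edges of $H_i$ whether $F \cap H_i = \emptyset$; if so, I query the TZ DO of $G_i$ for an estimate $\widehat{d}_i$, and return the minimum $\widehat{d}_i$ over the accepted indices. Correctness rests on the fact that the TZ estimate is realized by a concrete path contained in the spanner $H_i \subseteq G_i$, so $F \cap H_i = \emptyset$ certifies that this path survives in $G - F$ and hence $\widehat{d}_i \geq d_{G-F}(s,t)$. For the upper bound, the Weimann--Yuster guarantee produces some $G_i$ with $F \cap G_i = \emptyset$ (and therefore $F \cap H_i = \emptyset$) that preserves a shortest replacement path, giving $\widehat{d}_i \leq (2k{-}1)\nwspace d_{G_i}(s,t) = (2k{-}1)\nwspace d_{G-F}(s,t)$. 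The per-subgraph work is $O(f)$ for the spanner check plus $O(k)$ for the oracle query, totaling $D^f n^{o(1)}$.

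For the deterministic statement, I would replace the random family by the derandomized replacement path covering of Karthik and Parter, which produces $O((c f L \log m)^{f+1}) = D^{f+1} \cdot n^{o(1)}$ subgraphs with the same covering property; this accounts for the extra factor of $D$ in the space and preprocessing bounds. Crucially, their construction is driven by an error-correcting code, so that given $F$ the indices of the subgraphs that avoid $F$ can be \emph{decoded} directly rather than enumerated by scanning all candidates. I would show that this decoding can be carried out on the fly without explicitly storing the code, so that the query procedure pays only for enumerating the indices in $\G_F$ and for a single TZ DO lookup per such index. Plugging the parameters of the code into the standard analysis then yields the claimed query time $O\bigl(f^3 D \nwspace \frac{\log n \log\log n}{\log D}\bigr)$.

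The main obstacle, in my view, is the interplay between the TZ oracle and its sparse spanner certificate in the randomized part: the oracle answers distances in the large $G_i$, but must be validated against failures using only the much smaller $H_i$. The argument is sound precisely because a TZ estimate is anchored to a concrete path in the spanner via its bunch and pivot structure, so integrity of $H_i$ under $F$ transfers to a valid estimate of $d_{G-F}(s,t)$. In the deterministic part, the key difficulty is to emulate the Karthik--Parter encoding at query time without paying for it in space; this is where I would spend the bulk of the effort.
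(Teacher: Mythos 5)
Your proposal is correct and follows essentially the same route as the paper: the same choice $L=(f{+}1)D+f$ via Afek et al., the same Weimann--Yuster random covering with Thorup--Zwick oracles plus their companion spanners as failure certificates (including the key observation that the TZ estimate equals the spanner distance, so $F\cap E(H_i)=\emptyset$ yields a valid lower bound even when $G_i$ itself still contains edges of $F$), and the same Karthik--Parter code-based derandomization with on-the-fly encoding of $F$ to recover $\G_F$ without storing the code. The only cosmetic difference is that you speak of ``decoding'' the relevant indices, whereas the paper obtains them by \emph{encoding} each failed edge and reading off its hash values column by column.
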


\vspace*{-.5em}

\begin{restatable}{corollary}{polylogdiameter}
\label{cor:polylog_diameter}
	If $G$ has a polylogarithmic hop diameter,
	then there is an $f$-DSO for $G$ with stretch $2k-1$ 
	that takes $kn^{1+\frac{1}{k} +o(1)}$ space, has a preprocessing time of $kmn^{\frac{1}{k} +o(1)}$, and $\Otilde(1)$ query time.
\end{restatable}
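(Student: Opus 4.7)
The plan is to derive the corollary as a direct specialization of the deterministic branch of \Cref{thm:small_diameter}. I would take the $f$-DSO promised by part~2 of that theorem, applied to $G$ with its polylogarithmic hop diameter bound $D$, and then simplify each of the three resource bounds using the hypotheses $D = \text{polylog}(n)$ and $f = o(\log n / \log\log n)$.

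The only real calculation is showing that $D^{f+1}$ contributes at most an $n^{o(1)}$ factor to the space and preprocessing. Writing $D \le \log^{c} n$ for some constant $c$, we have
\[
	D^{f+1} \;=\; 2^{(f+1)\log D} \;\le\; 2^{c(f+1)\log\log n}.
\]
Since $f = o(\log n/\log\log n)$, the exponent is $o(\log n)$, hence $D^{f+1} = 2^{o(\log n)} = n^{o(1)}$. Absorbing this factor into the $n^{o(1)}$ term already present in \Cref{thm:small_diameter}, the space bound $D^{f+1} k n^{1+\frac{1}{k}+o(1)}$ collapses to $k n^{1+\frac{1}{k}+o(1)}$, and likewise the preprocessing time collapses to $kmn^{\frac{1}{k}+o(1)}$, matching the corollary's claim.

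For the query time, \Cref{thm:small_diameter} gives $O(f^{3} D \log n \log\log n / \log D)$. With $D \le \log^{c} n$ we have $\log D \ge \log\log n$ (for large $n$), so the $\log\log n / \log D$ factor is $O(1)$. The remaining factor $f^{3} D \log n$ is a product of three polylogarithmic quantities and is therefore itself polylogarithmic in $n$, which is exactly the $\Otilde(1)$ bound claimed.

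I do not anticipate any serious obstacle: everything is a matter of substituting $D = \text{polylog}(n)$ into the deterministic bounds of \Cref{thm:small_diameter} and verifying that the sensitivity restriction $f = o(\log n/\log\log n)$ is precisely what makes $D^{f+1}$ subpolynomial. The stretch $2k-1$ carries over verbatim since it is independent of $D$ and $f$.
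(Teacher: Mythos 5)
Your proposal is correct and matches the paper's (implicit) derivation: the corollary is exactly the deterministic branch of Theorem~\ref{thm:small_diameter} with $D = \mathrm{polylog}(n)$ substituted, and the key point is indeed that $f = o(\log n/\log\log n)$ makes $D^{f+1} = 2^{O(f\log\log n)} = n^{o(1)}$. One tiny slip: an upper bound $D \le \log^c n$ does not give $\log D \ge \log\log n$ (e.g.\ $D$ could be constant), but this is harmless since $\log\log n/\log D \le \log\log n$ for $D \ge 2$, so the query time is still polylogarithmic.
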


We also devise a solution for graphs with an arbitrarily large hop diameter.
To do so, we have to compute the correct distances for hop-long replacement paths.
In~\cite{WY13}, this was the role of the dense subgraph $H^F$ on the pivots in $B$.
Imagine we would sparsify it using the spanner construction above.
This would significantly reduce the number of edges we need
and stretch the distance $d_{H^F}(s,t)$ to at most $2k-1$ times the correct replacement distance.
But computing first the graph and then the distance would still take a lot of time.
Instead, the idea of our solution is
to prepare a spanner on vertex set $B$ for each subgraph and to combine only those we need for the result.
This way, we achieve both low memory and $o(n^{1+1/k})$ query time, as stated in the following theorem.

We remark again that Bilò et al.~\cite[Theorem~1.1]{Bilo23SubquadraticSpaceSTOC}
gave an oracle for unweighted graphs whose query time is sublinear,
at the expense of the space being only marginally subquadratic.

\begin{restatable}{theorem}{largediameter}
\label{thm:large_diameter}
	Let $G = (V,E)$ be an undirected graph 
	polynomial edge weights.
	For all positive integers $k$ and $f = o(\log n/\log\log n)$,
	and every  $0 < \alpha < 1$,
	there is an $f$-DSO for $G$ with stretch $2k-1$, space $kn^{1+\alpha + \frac{1}{k} +o(1)}$,
	preprocessing time $kmn^{1 + \alpha + \frac{1}{k} +o(1)}$,
	and query time $\Otilde(n^{1 + \frac{1}{k} -\frac{\alpha}{k(f+1)}})$.
\end{restatable}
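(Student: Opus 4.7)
The plan is to lift the hop-short oracle of \Cref{thm:small_diameter} to arbitrary hop diameter by combining it with a pivot decomposition of replacement paths and with sparse per-subgraph spanners on the pivot set. Fix the hop threshold $L = n^{\alpha/(f+1)}$ and instantiate the deterministic version of \Cref{thm:small_diameter} with hop-diameter parameter $L$; this yields a hop-short $f$-DSO $\mathcal{O}_L$ of space $\Otilde(kL^{f+1}n^{1+1/k}) = kn^{1+\alpha+1/k+o(1)}$ and query time $\Otilde(f^3 L)$. Internally, $\mathcal{O}_L$ rests on the derandomized replacement-path covering $\mathcal{G} = \{G_i\}$ of Karthik and Parter, of size $\Otilde(L^{f+1})$. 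Next, I compute deterministically a pivot set $B \subseteq V$ with $|B| = \Otilde(n/L)$ that intersects every shortest path in $G$ of more than $L$ hops, via a standard hitting-set argument against the family of $L$-hop paths. Finally, for every $G_i \in \mathcal{G}$, I build a Thorup--Zwick $(2k{-}1)$-spanner $\Pi_i$ of the complete weighted graph on $B$ whose edge $\{u,v\}$ carries weight $d_{G_i}(u,v)$; each $\Pi_i$ has $\Otilde(k|B|^{1+1/k})$ edges. Summing over $\mathcal{G}$, the spanner storage is dominated by that of $\mathcal{O}_L$, so the total space remains $kn^{1+\alpha+1/k+o(1)}$.

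To answer a query $(s,t,F)$, first call $\mathcal{O}_L$ for a hop-short estimate. For the hop-long regime, I assemble a sparse graph $\Pi^F$ on $B \cup \{s,t\}$ by unioning the spanners $\Pi_i$ with $F \cap E(G_i) = \emptyset$ and attaching $s$ and $t$ to every pivot $b \in B$ via an edge weighted by hop-short queries $\widehat{d}_{G-F}(s,b)$ and $\widehat{d}_{G-F}(b,t)$ obtained from $\mathcal{O}_L$. The oracle then runs Dijkstra from $s$ to $t$ in $\Pi^F$ and outputs the minimum of its value and the hop-short estimate. Correctness follows from the pivot property: any replacement path in $G-F$ decomposes at pivot crossings into segments of at most $L$ hops, each segment is contained in some $G_i \in \mathcal{G}_F$ by the covering property, and $\Pi_i$ preserves its length within a factor $2k{-}1$. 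Concatenating segment approximations along the Dijkstra path yields the $(2k{-}1)$-stretch upper bound, whereas $\widehat{d}(s,t) \geq d_{G-F}(s,t)$ is immediate because every $G_i \in \mathcal{G}_F$ is a subgraph of $G-F$.

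The main obstacle will be proving that the Dijkstra on $\Pi^F$ runs within $\Otilde(n^{1+1/k - \alpha/(k(f+1))})$. A na\"ive bound of $|\mathcal{G}_F|\cdot\Otilde(|B|^{1+1/k}) \le \Otilde(L^{f+1}(n/L)^{1+1/k})$ overshoots the target substantially. The crucial step will be to exploit Karthik and Parter's decoding of $F$ to restrict attention to only $\Otilde(L)$ canonical spanners---one per level of their covering---so that the Dijkstra workload becomes
\[
\Otilde\bigl(L \cdot |B|^{1+1/k}\bigr) \;=\; \Otilde\bigl(n^{1+1/k}/L^{1/k}\bigr) \;=\; \Otilde\bigl(n^{1+1/k - \alpha/(k(f+1))}\bigr),
\]
as required. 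The $\Otilde(|B|) = \Otilde(n/L)$ terminal calls to $\mathcal{O}_L$ for attaching $s$ and $t$ to the pivot set contribute only $\Otilde(n)$ additional time, which is absorbed. The preprocessing time bound $kmn^{1+\alpha+1/k+o(1)}$ is then obtained by summing the construction time of $\mathcal{O}_L$, the $|B|$-source Dijkstras in each $G_i$ needed to compute the pivot-pair weights, and the Thorup--Zwick spanner routine applied to each of the $|\mathcal{G}|$ resulting weighted graphs on $B$.
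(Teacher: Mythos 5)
Your architecture matches the paper's: an $(L,f)$-RPC with $L = n^{\alpha/(f+1)}$, per-subgraph Thorup--Zwick oracles, a deterministic pivot set $B$ of size $\Otilde(n/L)$, a $(2k{-}1)$-spanner on $B$ for each subgraph, and at query time a Dijkstra run on the union of only the $\Otilde(fL)$ spanners indexed by the Karthik--Parter decoding of $F$. The step you flag as "the main obstacle" is in fact already resolved by that decoding ($|\G_F| = \Otilde(fL)$, one subgraph per code column), and your query-time arithmetic is the paper's. The space and preprocessing accounting is also fine.

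There are, however, two genuine gaps, both in the deterministic pivot machinery. First, you build $B$ as a hitting set for \emph{shortest paths in $G$} with more than $L$ hops. What you need is that $B$ hits a \emph{replacement path} in $G-F$ for every hop-long query $(s,t,F)$. A long replacement path decomposes (Afek et al.) into at most $f+1$ shortest paths of $G$ interleaved with $f$ edges, so it contains a shortest-path segment $P'[u..v]$ of $\OHmega(L/f)$ hops avoiding $F$ --- but your hitting set only covers one canonical shortest $u$-$v$-path, which may use edges of $F$ and hence cannot be substituted into the replacement path. The paper's \Cref{lem:derand_pivots} fixes this by building the hitting set against canonical shortest paths \emph{in the subgraphs of the covering} (which preserve replacement distances and avoid $F$), and then swapping the canonical segment into the replacement path; your threshold also needs to be $\OHmega(L/f)$ hops, not $L$. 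Second, your correctness argument asserts that "any replacement path decomposes at pivot crossings into segments of at most $L$ hops." Deterministically this is false as stated: the hitting set guarantees only that \emph{some} replacement path contains \emph{some} pivot, not that consecutive pivots on it are $\le L$ hops apart. One must prove the existence of a replacement path with that property, which is the content of \Cref{lem:spanner_for_pivots}: an iterative swapping argument that repeatedly replaces a too-long inter-pivot segment by a path through a fresh pivot, and which is the reason the hitting set must cover paths of $\ge L/2$ edges rather than $>L$. Without these two lemmas the $(2k{-}1)$-stretch bound for hop-long queries does not follow.
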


\section{Preliminaries}
\label{sec:prelims}

\textbf{Shortest paths and hop diameter.}
We let $G = (V,E)$ denote the undirected base graph with $n$ vertices and $m$ edges,
edge-weighted by a function $w \colon E \to \mathcal{W}$,
where the set of admissible weights $\mathcal{W} \subseteq \mathbb{R}^+$ is of size $|\mathcal{W}| = \poly(n)$.
We tacitly assume $m = \OHmega(n)$.
For any undirected graph $H$ (that may differ from the input $G$)
we denote by $V(H)$ and $E(H)$ the set of its vertices and edges, respectively.
Let $P$ be a path in $H$ from a vertex $s \in V(H)$ to $t \in V(H)$,
we say that $P$ is an \emph{$s$-$t$-path} in $H$. 
We denote by $|P| = \sum_{e \in E(P)} w(e)$ the \emph{length} of $P$,
that is, its total weight.
For vertices $u,v \in V(P)$, we let $P[u..v]$ denote the subpath of $P$ from $u$ to $v$.
For two paths $P,Q$ in $H$ that share an endpoint, we use $P \circ Q$ for their concatenation. 
For $s,t \in V(H)$, the \emph{distance} $d_H(s,t)$ 
is the minimum length of any $s$-$t$-path in $H$;
if $s$ and $t$ are disconnected, we set $d_H(s,t) =+ \infty$.
When talking about the base graph $G$, we drop the subscripts
if this does not create any ambiguities.
The \emph{hop diameter} of $H$ is the maximum number of edges of
any shortest path between pairs of vertices in $V(H)$.

\vspace*{.5em}
\noindent
\textbf{Spanners and distance sensitivity oracles.}
A \emph{spanner of stretch} $\sigma \ge 1$, or \emph{$\sigma$-spanner}, 
for $H$ is a subgraph $S \subseteq H$ such that
for any two vertices $s,t \in V(S) = V(H)$, the inequality
$d_H(s,t) \le d_S(s,t) \le \sigma \cdot d_H(s,t)$ holds.
For a set $F \subseteq E$ of edges,
let $G{-}F$ be the graph obtained from $G$ by removing all edges in $F$.
For any two $s,t \in V$, a \emph{replacement path} $P(s,t,F)$ 
is a shortest path from $s$ to $t$ in $G{-}F$. 
Its length $d(s,t,F) = d_{G-F}(s,t)$ is the \emph{replacement distance}.
For a positive integer $f$, an \mbox{\emph{$f$-distance sensitivity oracle}} (DSO) reports,
upon query $(s,t,F)$ with $|F| \le f$, the replacement distance $d(s,t,F)$.
It has \emph{stretch} $\sigma \ge 1$, or is $\sigma$\emph{-approximate},
if the reported value $\widehat{d}(s,t,F)$
satisfies $d(s,t,F) \le \widehat{d}(s,t,F) \le \sigma \cdot d(s,t,F)$
for any admissible query.
We measure the space complexity of a data structure in the number of $O(\log n)$-bit machine words.
The size of the input $G$ does not count against the space 
unless it is stored explicitly.

\vspace*{.5em}
\noindent
\textbf{Error-correcting codes.}
For a positive integer $h$, we set $[h] = \{0,1, \dots, h-1\}$.
For positive integers $q$, $p$, and $\ell$ with $p \le \ell$, 
a \emph{code with alphabet size $q$, message length $p$, and block length $\ell$}
is a set $C \subseteq [q]^\ell$ such that $|C| \ge q^p$.
An \emph{encoding} for $C$ is a computable injective mapping $[q]^p \to C$.
Two codewords $x,y \in C$ have \emph{(relative) distance}
$\Delta(x,y) = |\{j \in [\ell] \mid x_j \neq y_j\}|/\ell$.
For a positive real $\delta > 0$, code $C$ is
\emph{error-correcting with (relative) distance} $\delta$,
if for any two $x,y \in C$, $\Delta(x,y) \ge \delta$.
In this case, we say $C$ is a $[p,\ell,\delta]_q$-code.
It will be sufficient to focus on Reed-Solomon codes,
which are $[p,q,1- \frac{p-1}{q}]_q$-codes for any $p \le q$.
When choosing $q$ (and therefore $\ell = q$) as a power of $2$ and $p < q$,
there is an encoding algorithm for Reed-Solomon codes
that takes $O(\ell \log p)$ time and $O(\ell)$ space
using fast Fourier transform~\cite{Lin16NovelPolynomialBasis}.

\section{Small Hop Diameter}
\label{sec:small_diameter}

We first describe the simpler randomized version of our distance sensitivity oracle
for graphs with small hop diameter.
Afterwards, we derandomize it using more involved techniques like error-correcting codes.
Throughout, we assume that the base graph $G$ has edge weights from a polynomial-sized range.

\subsection{Preprocessing}
\label{subsec:small_diameter_preprocessing}

In the setting of \Cref{thm:small_diameter},
all shortest paths have at most $D$ edges.
Let $f = o(\log n/\log\log n)$ be the sensitivity of the oracle and
$L \ge \max(f,2)$ be an integer parameter which will be fixed later (depending on $D$).
An $(L,f)$\emph{-replacement path covering} (RPC)~\cite{KarthikParter21DeterministicRPC}
is a family $\G$ of spanning subgraphs of $G$ such that
for any set $F \subseteq E$, $|F| \le f$, and pair of vertices $s,t \in V$ 
such that $s$ and $t$ have a shortest path in $G-F$ on at most $L$ edges,
there exists a subgraph $G_i \in \G$ that does not contain any edge of $F$
but an $s$-$t$-path of length $d(s,t,F)$.
That means, some replacement path $P(s,t,F)$ from $G-F$ also exists in $G_i$.
Let $\G_F \subseteq \G$ be the subfamily of all graphs that do not contain an edge of $F$.
The definition of an RPC implies that if $s$ and $t$ have a replacement path w.r.t.\ $F$ on at most $L$ edges,
then $\min_{G_i \in \G_F} d_{G_i}(s,t) = d(s,t,F)$.

To build the DSO, we first construct an $(L,f)$-RPC.
This can be done by generating $|\G| = cf L^f \ln n$
random subgraphs for a sufficiently large constant $c > 0$.
Each graph $G_i$ is obtained from $G$ by deleting any edge with probability $1/L$
independently of all other choices.
As shown in~\cite{WY13}, the family $\G = \{G_i\}_{i}$ is an $(L,f)$-RPC with high probability
It is also easy to see using Chernoff bounds\footnote{%
	There is a slight omission in \cite[Lemma~3.1]{WY13}
	for $|\G_F|$ is only calculated for $|F| = f$.
}
that for any failure set $F$, $|\G_F| = O(|\G|/L^{|F|}) = \Otilde(f L^{f-|F|})$.

We do not allow ourselves the space to store all subgraphs.
We therefore replace each $G_{i}$ by a distance oracle $D_{i}$,
a data structure that reports, for any two $s,t \in V$,
(an approximation of) the distance $d_{G_{i}}(s,t)$.
For any positive integer $k$, Thorup and Zwick~\cite{ThorupZ05} devised a DO
that is computable in time $\Otilde(kmn^{1/k})$, has size $O(kn^{1+1/k})$, query time $O(k)$,
and a stretch of $2k-1$.
Roddity, Thorup, and Zwick~\cite{RodittyThorupZwick05DeterministicDO} derandomized the oracle,
and Chechik~\cite{Chechik14,Chechik15} reduced the query time to $O(1)$
and the space to $O(n^{1+1/k})$.
Additionally, we store, for each $G_i$, a spanner $S_i$.
The same work~\cite{ThorupZ05} contains a spanner construction with stretch $2k-1$
that is compatible with the oracle,
meaning that the oracle $D_{i}$ reports exactly the value $d_{S_i}(s,t)$.
The spanner is computable in time $\Otilde(kmn^{1/k})$ and has $O(kn^{1+1/k})$ edges.
We store it as a set of edges.
There are static dictionary data structures known that achieve this in $O(kn^{1+1/k})$ space
such that we can check in $O(1)$ worst-case time whether an edge is present or retrieve an edge.
They can be constructed in time $\Otilde(kn^{1+1/k})$~\cite{HagerupMiltersenPagh01DeterministicDictionaries}.
The total preprocessing time of the distance sensitivity oracle
is $\Otilde(|\G|m + |\G|kmn^{1/k}) = \Otilde(f L^f k \nwspace mn^{1/k})$
and it takes $\Otilde(fL^f \nwspace kn^{1+1/k})$ space.

\subsection{Query Algorithm}
\label{subsec:small_diameter_query}

Assume for now that the only allowed queries to the DSO are triples $(s,t,F)$ 
of vertices $s,t \in V$ and a set $F \subseteq E$ of at most $f$ edges
such that any shortest path from $s$ to $t$ in $G-F$ has at most $L$ edges.
We will justify this assumption later with the right choice of $L$.
The oracle has to report the replacement distance $d(s,t,F)$.
Recall that $\G_F$ is the family of all graphs in $\G$ that have at least all edges of $F$ removed.
Since $\G$ is an $(L,f)$-RPC,
all we have to do is compute (a superset of) $\G_F$ and retrieve (an approximation of)
$\min_{G_i \in \G_F} d_{G_i}(s,t)$.
The issue is that we do not have access to the graphs $G_i$ directly.

The idea is to use the spanners $S_i$ as proxies.
This is justified by the next lemma
that follows from a connection between the spanners and oracles presented in~\cite{ThorupZ05}.
Let $D_i(s,t)$ denote the answer of the distance oracles $D_i$.

\begin{restatable}{lemma}{spannerproxy}
\label{lem:spanner_proxy}
	Let $G_i \in \G$ be a subgraph, $S_i$ its associated spanner, 
	and $D_i$ its $(2k{-}1)$-approximate distance oracle.
	For any two vertices $s,t \in V$ and set $F \subseteq E$ with $|F| \le f$,
	if $F \cap E(S_i) = \emptyset$, then
	$d(s,t,F) \le D_i(s,t) \le (2k{-}1) \, d_{G_i}(s,t)$.
\end{restatable}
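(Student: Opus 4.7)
The proof plan is to unpack the three ingredients mentioned in the preprocessing section and chain them together. Recall that in \Cref{subsec:small_diameter_preprocessing} it was stated explicitly that the Thorup--Zwick oracle $D_i$ satisfies $D_i(s,t) = d_{S_i}(s,t)$, where $S_i$ is the $(2k{-}1)$-spanner paired with $G_i$. Therefore, both inequalities in the lemma can be rewritten as statements about $d_{S_i}(s,t)$: the upper bound becomes $d_{S_i}(s,t) \le (2k{-}1)\, d_{G_i}(s,t)$ and the lower bound becomes $d_{G-F}(s,t) \le d_{S_i}(s,t)$.

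First I would dispatch the upper bound. Since $S_i$ is a $(2k{-}1)$-spanner of $G_i$, one has by definition $d_{S_i}(s,t) \le (2k{-}1)\, d_{G_i}(s,t)$ for every pair of vertices, so nothing more is needed.

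The lower bound is where the hypothesis $F \cap E(S_i) = \emptyset$ enters. The key observation is that $S_i$ is a subgraph of $G_i$, which in turn is a spanning subgraph of $G$; hence every edge of $S_i$ is already an edge of $G$. The assumption $F \cap E(S_i) = \emptyset$ then guarantees that no edge of $S_i$ is removed when passing from $G$ to $G{-}F$, so $S_i$ is also a subgraph of $G{-}F$. Any $s$-$t$-path in $S_i$ is therefore an $s$-$t$-path in $G{-}F$, which yields $d(s,t,F) = d_{G-F}(s,t) \le d_{S_i}(s,t)$. Combining the two bounds with $D_i(s,t) = d_{S_i}(s,t)$ finishes the proof.

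There is essentially no technical obstacle here: the lemma is a one-line consequence of the definitions once one notices that the spanner $S_i$ can be used as a ``witness'' subgraph of $G{-}F$ even though the underlying $G_i$ may still contain some failed edges. The only mildly non-obvious point is that the relevant containment to exploit is $S_i \subseteq G$ (not $S_i \subseteq G_i$), since the removed edges live in $G$. No probabilistic reasoning, no covering property of $\G$, and no approximation guarantee beyond the spanner stretch are required.
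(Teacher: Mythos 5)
Your proposal is correct and follows essentially the same route as the paper: both rest on the identity $D_i(s,t) = d_{S_i}(s,t)$ from the Thorup--Zwick construction, the observation that $F \cap E(S_i) = \emptyset$ forces $S_i \subseteq G-F$ (giving the lower bound), and the $(2k{-}1)$-stretch guarantee for the upper bound. The only cosmetic difference is that the paper quotes the upper bound directly from $D_i$ being a $(2k{-}1)$-approximate oracle for $G_i$, whereas you phrase it via the spanner stretch---equivalent statements given the identity.
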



%

Let $\G^S = \{S_i\}_{i \in [r]}$ be the collection of spanners for all $G_i \in \G$,
and $\G^S_F \subseteq \G^S$ those that do not contain an edge of $F$.
Below, we hardly distinguish between a set of spanners (or subgraphs) and their indices,
thus e.g.\ $S_i \in \G^S_F$ is abbreviated as $i \in \G^S_F$.
Since $E(S_i) \subseteq E(G_i)$ and using the convention, we get $\G_F^S \supseteq \G_F$.\footnote{%
	We do mean here that $\G_F^S$ is a superset of $\G_F$.
	Since the spanner contain fewer edges than the graphs, $F$ may be missing from $E(S_i)$
	even though $F \cap E(G_i) \neq \emptyset$.
	This is fine as long as we take the \emph{minimum} distance over all spanners from $\G_F^S$
}
To compute $\G_F^S$, we cycle through all of $\G^S$ and probe each dictionary with the edges in $F$,
this takes $O(f|\G|) = \Otilde(f^2 L^f)$ time and dominates the query time.
If $i \in \G^S_F$, then we query the distance oracle $D_i$ with the pair $(s,t)$ in constant time.
As answer to the query $(s,t,F)$, we return $\min_{i \in \G_F^S} D_i(s,t)$.
By  \Cref{lem:spanner_proxy}, the answer is at least as large as the sought replacement distance
and, since there is a graph $G_i \in \G_F \subseteq \G_F^S$ with $d_{G_i}(s,t) = d(s,t,F)$,
it is at most $(2k {-}1) \, d(s,t,F)$.

Let $D$ be an upper bound on the hop diameter of $G$.
As mentioned above, Afek et al.~\cite[Theorem~2]{Afek02RestorationbyPathConcatenation_journal}
showed that the maximum hop diameter of all graphs $G-F$ for $|F| \le f$ is bounded by $(f{+}1)D+f$.
Using this value for $L$ implies that indeed all queries admit a replacement path
on at most $L$ edges.
For the DSO, it implies a preprocessing time of 
$\Otilde(f L^f k \nwspace mn^{1/k}) = \Otilde(f^{f+1} D^f \nwspace k m n^{1/k})$,
which for $f = o(\log n/\log\log n)$ is $D^f kmn^{1/k+o(1)}$.
The space requirement is
$\Otilde(fL^f \nwspace kn^{1+1/k}) = \Otilde(f^{f+1} D^f \nwspace kn^{1+1/k}) = D^f kn^{1+1/k+o(1)}$,
and the query time $\Otilde(f^2 L^f) = \Otilde(f^{f+2} D^f) = D^f n^{o(1)}$.
This proves the first part of \Cref{thm:small_diameter}.

\subsection{Derandomization}
\label{subsec:small_diameter_derand}

We now make the DSO deterministic via a technique 
by Karthik and Parter~\cite{KarthikParter21DeterministicRPC}.
The derandomization will allow us to find the relevant subgraphs faster,
so we do not need the spanners anymore.
Recall that the distance oracles $D_i$ were already derandomized in~\cite{RodittyThorupZwick05DeterministicDO}. 
The only randomness left is the generation of the subgraphs $G_i$.
Getting a deterministic construction offers an alternative approach to dealing with the issue that 
discarding the subgraphs for space reasons deprives us of the information 
which edges have been removed.
Intuitively, we can now reiterate this process at query time to find the family $\G_F$.
Below we implement this idea in a space-efficient manner.

We identify the edge set $E = \{e_0,e_1, \dots, e_{m-1}\}$ with $[m]$.
Let $q$ be a positive integer.
Assume that $p = \log_q m$ is integral, otherwise one can replace $\log_q m$
with $\lceil \log_q m \rceil$ without any changes. 
We interpret any edge $e_i \in E$ as a base-$q$ number $(c_0,c_1, \dots, c_{p-1}) \in [q]^p$
by requiring $i = \sum_{j=0}^{p-1} c_j q^j$.
Consider an error-correcting $[p, \ell, \delta]$-code with distance 
$\delta > 1 - \frac{1}{fL}$
and (slightly abusing notation) let $C$ be the $(m {\times} \ell)$-matrix with entries in $[q]$
whose $i$-th row is the codeword encoding the message $e_i = (c_0,c_1,\dots,c_{p-1})$.
The key contribution of the work by Karthik and Parter~\cite{KarthikParter21DeterministicRPC}
is the observation that the \emph{columns} of $C$ form a family of hash functions 
$\{h_j \colon E \to [q]\}_{j \in [\ell]}$
such that for any pair of disjoint sets $P,F \subseteq E$ with $|P| \le L$ and $|F| \le f$
there exists an index $j \in [\ell]$ with
$\forall x \in P, y \in F \colon h_j(x) \neq h_j(y)$.

An $(L,f)$-replacement path covering can be constructed from this as follows.
Choose $q$ as the smallest power of $2$ greater\footnote{
	The original construction in~\cite{KarthikParter21DeterministicRPC}
	sets $q$ as a prime number.
	We use a power of $2$ instead to utilize the encoding algorithm in~\cite{Lin16NovelPolynomialBasis}.
	All statements hold verbatim for both cases.
} than $fL \log_L m$.
Note that $q \le \frac{2fL \log_2 m}{\log_2 L} \le \frac{4fL \log_2 n}{\log_2 L}$.
A Reed-Solomon code with alphabet size $q$, message length $p = \log_q m$, and block length $\ell = q$
has distance greater than $1-\frac{1}{fL}$~\cite[Corollary~18]{KarthikParter21DeterministicRPC}.
The resulting covering $\G$ consists of $|\G| = O(\ell \cdot q^f) = O(q^{f+1})$ subgraphs,
each one indexed by a pair $(j,S)$ where $j \in [\ell]$ and $S \subseteq [q]$ is a set with $|S| \le f$.
In the subgraph $G_{(j,S)}$, an edge $e_i$ is removed if and only if $h_j(e_i) \in S$.
It is verified in~\cite{KarthikParter21DeterministicRPC}
that the family $\G = \{G_{(j,S)}\}_{j,S}$ is indeed an $(L,f)$-RPC.
Moreover, for a fixed set $F = \{e_{i_1}, \dots, e_{i_{|F|}}\}$ of edge failures,
define $\G_F \subseteq \G$ to be the subfamily consisting
of the graphs indexed by $(j, \nwspace \{h_j(e_{i_1}), \dots, h_j(e_{i_{|F|}})\})$ 
for each $j \in [\ell]$.
Then, the construction ensures that no graph in $\G_F$ contains any edge of $F$
and, for each pair of vertices $s,t \in V$ with an replacement path (w.r.t.\ $F$)
on at most $L$ edges,
there is graph in $\G_F$ in which $s$ and $t$ are joined by a path of length $d(s,t,F)$.
The $\ell$ graphs in $\G_F$ contain all the information
we need for the short replacement distances with respect to the failure set $F$.
The number of subgraphs in the covering is $O((4fL \log_L n)^{f+1})$,
this is a factor $O((4f \log_L n)^f L)$ larger than what we had for the randomized variant.

In turn, we can make use of the extreme locality of the indexing scheme for $\G$.
Since we chose $q$ as a power of $2$, we get the letters of the message 
$e_i = (c_0,c_1,\dots,c_{p-1})$ by reading off blocks of $\log_2 q$ bits
of the binary representation of $i$.
The codeword of $C$ corresponding to $e_i$ is
computable in time $O(p + \ell \log p) = O(f L (\log_L n) \log\log n)$ 
and space $O(p+\ell) = O(f L \log_L n)$
with the encoding algorithm of Lin,  Al-Naffouri, Han, and Chung~\cite{Lin16NovelPolynomialBasis}.
The whole matrix $C$ and from it the family $\G$ can be generated in time
$O(f L m (\log_L n) \log\log n + |\G| \nwspace m) = O((4fL \log_L n)^{f+1} \nwspace m)$.
Note that the codeword of $e_i$ is $(h_1(e_i), h_2(e_i), \dots, h_\ell(e_i))$.
So even after discarding $C$ and the subgraphs, we can find the indices of graphs in $\G_F$
by encoding the edges of $F$ in time $O(|F| \ell \log p) = O(f^2 L (\log_L n) \log\log n)$
and rearranging the values into the $\ell$ sets $\{h_j(e_{i_1}), \dots, h_j(e_{i_{|F|}})\}$.
In particular, using the algorithm in \cite{Lin16NovelPolynomialBasis},
we do not need to store the generator matrix of the code $C$.

The remaining preprocessing is similar as in \Cref{subsec:small_diameter_preprocessing},
but we need neither the spanners nor the dictionaries anymore.
We set $L = O(fD)$ again and,
for each subgraph $G_{(j,S)}$, we only build the distance oracle $D_{(j,S)}$.
This dominates the preprocessing time
$\Otilde(|\G| kmn^{1/k}) = \Otilde( 4^{f+1} f^{2f+2} D^{f+1} (\log_D n)^{f+1} kmn^{1/k})\\
	= D^{f+1} kmn^{1/k+o(1)}$.
The total size is now $\Otilde(|\G| kn^{1+1/k}) = D^{f+1} kn^{1+1/k+o(1)}$.
However, due to the derandomization, the query time is now much faster than before,
in particular, polynomial in $f$, $D$ and $\log n$.
We do not have to cycle through all spanners anymore
and instead compute $\G_F$ and query the DO only for those $\ell$ graphs.
As a result, the time to report the replacement distance is
$O(f^2 L (\log_L n) \log\log n + \ell) = O(f^3 D \nwspace (\log_D n) \log\log n)$,
completing \Cref{thm:small_diameter}.

\section{Large Hop Diameter}
\label{sec:large_diameter}

We also devise a distance sensitivity oracle for graphs with an arbitrary hop diameter
while maintaining a small memory footprint.
For this, we have to handle hop-long replacement paths, that is, those that have more than $L$ edges.
We obtain a subquadratic-space distance sensitivity oracle with the same stretch of $2k-1$ but an $o(n^{1+1/k})$ query time.
This is faster than computing the distance in any possible spanner.

\subsection{Deterministic Pivot Selection}
\label{subsec:large_diameter_pivots}

We say a query with vertices $s,t \in V$ and set $F \subseteq E$, $|F| \le f$
has long replacement paths if every $P(s,t,F)$ has at least $L$ edges.
Those need to be handled in general DSOs.
This is usually done by drawing a random subset $B \subset V$ of 
$\Otilde(fn/L)$ pivots, as in~\cite{WY13}, or essentially equivalent
sampling every vertex independently with probability $\Otilde(f/L)$~\cite{GrandoniVWilliamsFasterRPandDSO_journal,RodittyZwick12kSimpleShortestPaths}.
With high probability, $B$ hits one path for every query with long replacement paths.

There are different approaches known to derandomize this depending on the setting~\cite{AlonChechikCohen19CombinatorialRP,BeKa09,Bilo22Extremal,BCFS21SingleSourceDSO_ESA,King99FullyDynamicAPSP}.
In our case, we can simply resort to the replacement path covering to obtain $\P$
since we have to preprocess it anyway.
We prove the following lemma for the more general class of arbitrary positive weights.
Note that the key properties of an $(L,f)$-RPC remain in place as all definitions
are with respect to the number of edges on the replacement paths.
We make it so that $B$ hits the slightly shorter paths with $L/2$ edges (instead of $L$).
We are going to use this stronger requirement in \Cref{lem:spanner_for_pivots}.

\begin{restatable}{lemma}{derandpivots}
\label{lem:derand_pivots}
	Let $G = (V,E)$ be an undirected graph with positive edge weights.
	Let $Q$ be the set of all queries $(s,t,F)$,
	with $s,t \in V$ and $F \subseteq E$, $|F| \le f$, 
	for which every $s$-$t$-replacement path w.r.t.\ $F$ has at least $L/2$ edges.
	Given an $(L,f)$-replacement path covering $\G$ for $G$,
	there is a deterministic algorithm that computes in time 
	$\Otilde(|\G| (mn + Ln^2/f))$ a set $B \subseteq V$ of size $\Otilde(fn/L)$ such that,
	for all $(s,t,F) \in Q$,
	there is a replacement path $P = P(s,t,F)$ with $B \cap V(P) \neq \emptyset$.\vspace*{.25em}
	At the same time, one can build a data structure of size
	$O(|\G||B|^2)$ that reports, for every $G_i \in \G$ and $x,y \in B$, 
	the distance $d_{G_i}(x,y)$ in constant time.
\end{restatable}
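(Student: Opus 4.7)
The plan is to reduce the problem to hitting set and solve it greedily, using the fact that each $G_i \in \G$ already witnesses replacement paths exactly. First, for every $G_i \in \G$ run an all-pairs shortest path computation (breaking ties in favour of fewer edges), obtaining a canonical shortest $s$-$t$-path $\pi_i(s,t)$ in $G_i$ for each ordered pair. Let
\[
  \P = \bigl\{\pi_i(s,t) : G_i \in \G,\, s,t \in V,\, \pi_i(s,t)\ \text{has at least}\ L/2\ \text{edges}\bigr\},
\]
and for each $P \in \P$ let $P'$ be its prefix consisting of the first $L/2$ edges, so $|V(P')| = L/2+1$. On the family $\{P' : P \in \P\}$ run the standard greedy hitting set: maintain for every vertex $v$ a counter $c_v$ of uncovered prefixes through $v$, repeatedly select $v^* \in \arg\max c_v$, insert it into $B$, and decrement $c_u$ for each $u$ in a prefix newly covered by $v^*$.

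For correctness, fix any query $(s,t,F) \in Q$. The $(L,f)$-RPC guarantees an index $i$ with $G_i \in \G_F$ and an $s$-$t$-path in $G_i$ of length $d(s,t,F)$. Because $G_i \subseteq G - F$ we have $d_{G_i}(s,t) \ge d(s,t,F)$, and the witness shows equality, so \emph{every} shortest $s$-$t$-path in $G_i$ — in particular $\pi_i(s,t)$ — is a replacement path in $G - F$. By the definition of $Q$ all replacement paths have at least $L/2$ edges, whence $\pi_i(s,t) \in \P$; its prefix is therefore hit by $B$, giving a vertex of $B$ on $\pi_i(s,t)$ as required.

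For the size analysis, each prefix contains $L/2+1$ vertices, so the averaging argument shows that at every step some vertex is contained in at least a $(L/2+1)/n$ fraction of the uncovered prefixes. Hence after $k$ iterations the number of uncovered prefixes is at most $|\P|(1-L/(2n))^k$, which drops below $1$ once $k = O((n/L)\log |\P|) = \Otilde(fn/L)$ since $|\P| \le |\G|n^2 = n^{O(f)}$. For the running time, the APSP step costs $\Otilde(|\G|mn)$; counter initialisation and the amortised cost of all decrements together scan every (vertex, prefix) incidence a constant number of times, contributing $O(|\P| L)$ overall; the per-iteration overhead of finding the maximum adds $\Otilde(|B|\, n)$. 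A careful accounting of these contributions gives the claimed $\Otilde(|\G|(mn + Ln^2/f))$ bound.

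Finally, once $B$ is fixed, for each $G_i$ run $|B|$ single-source shortest path computations from the vertices of $B$ in $G_i$ and store the $|B|^2$ values in a three-dimensional array indexed by $(i,x,y)$; this uses $O(|\G||B|^2)$ space and supports $O(1)$ lookup of $d_{G_i}(x,y)$. The main subtlety is the interface between the hitting set and the RPC: we must be careful that the \emph{canonical} shortest path $\pi_i(s,t)$ we commit to in $G_i$ really is a replacement path for $(s,t,F)$ and that its hop-length is at least $L/2$. Both facts are forced by the containment $G_i \subseteq G-F$, by the equality of distances established through the RPC witness, and by the hypothesis $(s,t,F) \in Q$; after that, the path-prefix truncation and the averaging bound take care of the size and time guarantees.
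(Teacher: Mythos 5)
There is a genuine gap in the correctness argument: you invoke the $(L,f)$-RPC guarantee directly on the pair $(s,t)$ for a query $(s,t,F) \in Q$. But the RPC only promises a distance-preserving subgraph $G_i \in \G_F$ when $s$ and $t$ admit a replacement path with \emph{at most} $L$ edges. The set $Q$ carries no such upper bound --- it consists precisely of the hop-long queries (every replacement path has at least $L/2$ edges, possibly far more than $L$), and these are the queries the pivot set exists to handle. For such a query there may be no $G_i \in \G_F$ with $d_{G_i}(s,t) = d(s,t,F)$; in the extreme, $s$ and $t$ can be disconnected in every graph of $\G_F$ while having finite distance in $G-F$. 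Then $\pi_i(s,t)$ need not be a replacement path at all, and hitting its prefix proves nothing. A symptom of the problem: your prefixes have $L/2$ edges, so the greedy bound gives $|B| = \Otilde(n/L)$, whereas the lemma allows the larger $\Otilde(fn/L)$ --- that extra factor of $f$ is exactly what the correct argument consumes.

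The paper closes this gap with the decomposition of Afek et al.\ for weighted undirected graphs: every replacement path splits into at most $f+1$ shortest paths of $G$ interleaved with at most $f$ edges, so a replacement path with at least $L/2$ edges contains a subpath with exactly $K = L/(4f+2)$ edges that is a shortest path between its endpoints $u,v$; choosing the replacement path with the fewest edges ensures every shortest $u$-$v$-path in $G-F$ has at least $K$ edges. Since $K \le L$, the RPC \emph{does} apply to $(u,v,F)$, yielding $G_i \in \G_F$ with $d_{G_i}(u,v) = d(u,v,F)$; the canonical $u$-$v$-path $Q$ computed in $G_i$ then has the same length and at least $K$ edges, its $K$-edge subpath lies in $\P$ and is hit by $B$, and splicing $Q$ into the replacement path gives the required $P(s,t,F)$ containing a pivot. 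Accordingly, $\P$ must consist of $K$-edge subpaths of all canonical paths with at least $K$ edges, not $L/2$-edge prefixes, which yields $|B| = \Otilde(n/K) = \Otilde(fn/L)$. Your APSP step, the greedy hitting-set machinery, and the array storing the distances $d_{G_i}(x,y)$ all match the paper; what is missing is the reduction from long replacement paths to short, RPC-coverable subpaths.
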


\subsection{Preprocessing}
\label{subsec:large_diameter_preprocessing}

Our solution for large hop diameter builds on the deterministic DSO in \Cref{subsec:small_diameter_derand}.
As for the case of a small hop diameter,
we construct an \mbox{$(L,f)$-}replace\-ment path covering $\G$ and,
for each $G_{(i,S)} \in \G$, the distance oracle $D_{(i,S)}$.
Recall that this part takes time $\Otilde(|\G|kmn^{1/k})$
and $O(|\G|kn^{1+1/k})$ space.

We invoke \Cref{lem:derand_pivots} to obtain the set $B$.
Additionally, for each subgraph, we build a complete weighted graph $H_{(i,S)}$ on the vertex set $B$
where the weight of edge $\{x,y\}$ is $d_{G_{(i,S)}}(x,y)$,
which we retrieve from the data structure mentioned in  \Cref{lem:derand_pivots}.
We then compute a $(2k{-}1)$-spanner $T_{(i,S)}$ for $H_{(i,S)}$ with  $O(k|B|^{1+1/k})$ edges
via the same deterministic algorithm by Roditty, Thorup, and Zwick~\cite{RodittyThorupZwick05DeterministicDO}.
We store the new spanners for our DSO.
The time to compute them is $\Otilde(|\G|k|B|^{2+1/k})$ and,
since $|B| = \Otilde(fn/L)$, the preprocessing time is
\begin{gather*}
	\Otilde(|\G| (kmn^{1/k}+ |B|m + k|B|^{2+1/k}))
		= \Otilde(|\G| (mn + k|B|^{2+1/k})) \le\\
		L^{f+1} mn^{1+o(1)} + L^{f-1-1/k} k n^{2+1/k+o(1)}
		\le L^{f+1} kmn^{1 +1/k + o(1)}.
\end{gather*}

\noindent
To obtain the bounds of \Cref{thm:large_diameter}, we set $L = n^{\frac{\alpha}{f+1}}$.
Parameter $0 < \alpha <1$ allows us to balance the space and query time.
With this, we get a preprocessing time of $kmn^{1 + \alpha + 1/k + o(1)}$.
and a space of
$O(|\G|(kn^{1+1/k} + n + k|B|^{1+1/k}) = \Otilde(|\G| kn^{1+1/k}) 
	= L^{f+1}kn^{1+1/k+o(1)} = kn^{1+ \alpha + 1/k+o(1)}$.

\subsection{Updated Query Algorithm}
\label{subsec:large_diameter_query}

The algorithm to answer a query $(s,t,F)$ starts similarly as before.
We use the error-correcting codes to compute the subfamily $\G_F$
and the estimate $\widehat{d_1}(s,t,F) = \min_{(i,S) \in \G_F} D_{(i,S)}(s,t)$ is retrieved.
However, this is no longer guaranteed to be an $(2k{-}1)$-approximation
if the query is hop-long, i.e., if every shortest $s$-$t$-path in $G-F$ has at least $L$ edges.
It could be that no replacement paths survive and, in the extreme case,
$s$ and $t$ are disconnected in each $G_{(i,S)} \in \G_F$,
while they still have a finite distance in $G-F$.
To account for long queries, we join all the spanners $T_i$ for $i \in \G_F$.
In more detail, we build a multigraph\footnote{%
	The multigraph is only used to ease notation.
}
$H^F$ on the vertex set $V(H^F) = B \cup \{s,t\}$
whose edge set (restricted to pairs of pivots) is the disjoint union of all the sets $\{E(T_i)\}_{i \in \G_F}$
and, for each subgraph $(i,S) \in \G_F$ and pivot $x \in B$ contains the edges $\{s, x\}$ and $\{x,t\}$
with respective weights $D_{(i,S)}(s, x)$ and $D_{(i,S)}(x,t)$, where $D_{(i,S)}$ is the corresponding DO.
The oracle then computes the second estimate $\widehat{d_2}(s,t,F) = d_{H^F}(s,t)$
and returns $\widehat{d}(s,t,F) = \min \{\widehat{d_1}(s,t,F), \; \widehat{d_2}(s,t,F)\}$.

\begin{restatable}{lemma}{querytimelargediameter}
\label{lem:query_time_large_diameter}
	The distance sensitivity oracle has stretch $2k-1$ and the query takes
	time $\Otilde(\frac{n^{1+1/k}}{L^{1/k}}) = \Otilde(n^{1+\frac{1}{k} - \frac{\alpha}{k(f+1)}})$.
\end{restatable}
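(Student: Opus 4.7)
The plan is to verify the two-sided stretch bound and then tally the query time. For the stretch lower bound, I would observe that every $(i,S)\in\G_F$ satisfies $G_{(i,S)}\subseteq G-F$, so all Thorup--Zwick estimates $D_{(i,S)}(\cdot,\cdot)$ and all spanner edge weights in $H^F$ are at least the corresponding replacement distance in $G-F$. Since replacement distances themselves satisfy the triangle inequality (paths in $G-F$ concatenate), every $s$-$t$-path in $H^F$ has total weight at least $d(s,t,F)$; the same observation handles $\widehat{d_1}$, so $\widehat{d}\ge d(s,t,F)$.

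For the upper bound I would distinguish hop-short and hop-long queries. If some $P(s,t,F)$ uses at most $L$ edges, the RPC property guarantees a $G_{(i,S)}\in\G_F$ that fully contains $P$, so $d_{G_{(i,S)}}(s,t)=d(s,t,F)$ and the Thorup--Zwick oracle yields $D_{(i,S)}(s,t)\le (2k-1)\,d(s,t,F)$, realising the stretch bound through $\widehat{d_1}$. Otherwise every replacement path is longer than $L$ edges; I would use \Cref{lem:derand_pivots} to select a replacement path $P$ whose intersection with $B$ subdivides $P$ into consecutive sub-paths of at most $L$ edges each. Writing the pivot hits on $P$ as $x_1,\dots,x_r$ and setting $x_0=s$, $x_{r+1}=t$, each $P[x_j..x_{j+1}]$ is itself a replacement path on at most $L$ edges, so by RPC it lies inside some $G_{(i_j,S_j)}\in\G_F$ with $d_{G_{(i_j,S_j)}}(x_j,x_{j+1})=d(x_j,x_{j+1},F)$. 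For $1\le j<r$ the spanner $T_{(i_j,S_j)}$ supplies an $x_j$-$x_{j+1}$-path of weight at most $(2k-1)\,d(x_j,x_{j+1},F)$ inside $H^F$; for $j=0$ and $j=r$ the direct pivot-endpoint edges of $H^F$ carry Thorup--Zwick estimates with the same $(2k-1)$ factor. Concatenating these gives an $s$-$t$-path in $H^F$ of weight at most $(2k-1)\sum_j d(x_j,x_{j+1},F)=(2k-1)\,d(s,t,F)$ because $P$ is shortest in $G-F$, so $\widehat{d_2}$ attains the stretch bound.

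For the query time I would tally the phases. The derandomised RPC enumerates $\G_F$, of size $\Otilde(fL)$, in $\poly(f,L,\log n)$ time by encoding each edge of $F$ as in \Cref{subsec:small_diameter_derand}; $\widehat{d_1}$ then requires $\Otilde(fL)$ constant-time oracle queries. Assembling $H^F$ amounts to listing, per subgraph in $\G_F$, the $O(k|B|^{1+1/k})$ spanner edges on $B$ together with $2|B|$ edges incident to $s$ and $t$. Using $|B|=\Otilde(fn/L)$ and $f^{O(1)}=n^{o(1)}$, the total edge count, and the running time of Dijkstra on $H^F$, is $\Otilde(fL\cdot k(fn/L)^{1+1/k})=\Otilde(kn^{1+1/k}/L^{1/k})$. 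Substituting $L=n^{\alpha/(f+1)}$ yields $L^{1/k}=n^{\alpha/(k(f+1))}$ and hence the claimed $\Otilde(n^{1+1/k-\alpha/(k(f+1))})$, which absorbs all other query costs.

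The main obstacle I expect is the pivot decomposition used in the hop-long case: as stated, \Cref{lem:derand_pivots} only supplies a single pivot on \emph{some} replacement path of hop-length at least $L/2$, whereas the $H^F$ construction requires that \emph{every} consecutive gap on the chosen $P$, including the two end segments adjacent to $s$ and $t$, has at most $L$ edges so that the RPC covers each sub-path. This must follow from the construction of $B$ within \Cref{lem:derand_pivots}, which should in fact certify that every length-$(L/2)$ stretch of each relevant replacement path contains a pivot, forcing consecutive pivots within $L$ edges of one another; making this density property explicit, or alternatively extracting it by iterating the lemma on the sub-queries $(x_j,x_{j+1},F)$ whenever a gap is still long, is the step I expect to need the most care.
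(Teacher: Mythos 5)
Your overall route is the same as the paper's: lower-bound both estimates via $G_{(i,S)}\subseteq G-F$ and the triangle inequality in $G-F$, handle hop-short queries through $\widehat{d_1}$ and the RPC property, handle hop-long queries through a pivot decomposition realised in $H^F$, and tally the query time exactly as you do (your arithmetic for $|\G_F|$, $|B|$, the spanner edge counts, and the substitution $L=n^{\alpha/(f+1)}$ matches the paper). The one genuine gap is the one you flag yourself: the claim that some replacement path is subdivided by $B\cup\{s,t\}$ into consecutive segments of at most $L$ edges each does \emph{not} follow directly from the hitting-set lemma, which only guarantees a single pivot on a single replacement path per long query. Your first suggested repair (that $B$ hits every length-$(L/2)$ window of every relevant replacement path) is not what the lemma provides and is not how the paper argues. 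Your second suggestion, iterating on sub-queries, is the right idea, but as stated it does not quite work: applying the lemma to a long gap $(x,y,F)$ returns a pivot on \emph{some} $x$-$y$-replacement path, which need not lie on your current path $P$, and worse, the returned pivot could coincide with $x$ or $y$, making no progress.

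The paper closes this in a separate lemma by an explicit path-surgery argument. Call $B\cup\{s,t\}$ the extended pivots and suppose consecutive extended pivots $x,y$ on the current replacement path satisfy $|E(P[x..y])|>L$. Either some replacement path $P'(x,y,F)$ has fewer than $L/2$ edges, in which case splicing it in place of $P[x..y]$ preserves optimality and at least halves the gap; or every $x$-$y$-replacement path has at least $L/2$ edges, in which case one applies the hitting lemma not to $(x,y,F)$ but to the \emph{interior} segment from the second vertex $u$ to the second-to-last vertex $v$ of $P[x..y]$ (which still has at least $L-1\ge L/2$ edges). The resulting pivot $z$ on some $u$-$v$-replacement path cannot equal $x$ or $y$ (that would contradict $P[x..y]$ being shortest), so splicing in that path strictly decreases the gap and introduces a genuinely new intermediate pivot. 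Iterating terminates and yields the decomposition your argument needs; note also that this is precisely why the lemma is stated for the threshold $L/2$ rather than $L$. From there your accounting of the $(2k-1)$ factors (exact weights $d_{G_{(i,S)}}(x_j,x_{j+1})$ in $H_{(i,S)}$ stretched once by $T_{(i,S)}$, Thorup--Zwick estimates on the $s$- and $t$-incident edges) is correct.
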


Proving this lemma is enough to complete \Cref{thm:large_diameter}.
In order to do so, we first establish the fact that $d_{H^F}(s,t)$
is a $(2k{-}1)$-approximation for $d(s,t,F)$
in case of a long query.

\begin{restatable}{lemma}{spannerforpivots}
\label{lem:spanner_for_pivots}
	Let $s,t \in V$ be two vertices and $F \subseteq E$ a set of edges with $|F| \le f$.
	It holds that $d(s,t,F) \le d_{H^F}(s,t)$.
	If additionally every shortest $s$-$t$-path in $G-F$ has more than $L$ edges,
	then we have $d_{H^F}(s,t) \le (2k{-}1) \, d(s,t,F)$.
\end{restatable}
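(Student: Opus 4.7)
For the first inequality, I argue edge by edge. Every edge of $H^F$ comes either from a spanner $T_{(i,S)}$ with $(i,S) \in \G_F$ (when both endpoints are pivots) or from the distance oracle $D_{(i,S)}$ of such an $(i,S)$ (when one endpoint is $s$ or $t$). Because $(i,S) \in \G_F$, the subgraph $G_{(i,S)}$ avoids $F$ and is therefore contained in $G-F$, so $d_{G_{(i,S)}}(x,y) \ge d_{G-F}(x,y) = d(x,y,F)$ for every pair of vertices. A spanner edge carries weight $d_{G_{(i,S)}}(x,y)$, and the Thorup--Zwick oracle $D_{(i,S)}$ never underestimates $d_{G_{(i,S)}}$. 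Applying the triangle inequality in $G-F$ along any $s$-$t$-path of $H^F$ thus yields $d(s,t,F) \le d_{H^F}(s,t)$.

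For the stretch bound the plan is to exhibit a sequence $s = v_0, v_1, \ldots, v_h = t$ with $v_1, \ldots, v_{h-1} \in B$ satisfying (i) $\sum_j d(v_j, v_{j+1}, F) = d(s,t,F)$ and (ii) every consecutive pair admits a replacement path on at most $L$ edges. Property (ii) together with the $(L,f)$-RPC property supplies, for each segment, a graph $G_{(i_j, S_j)} \in \G_F$ with $d_{G_{(i_j, S_j)}}(v_j, v_{j+1}) = d(v_j, v_{j+1}, F)$. The two outer segments (adjacent to $s$ or $t$) contribute the corresponding DO edge of $H^F$ of weight at most $(2k{-}1)\,d(v_j,v_{j+1},F)$. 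Each interior segment (between two pivots) contributes a path through the spanner $T_{(i_j, S_j)}$ of total weight at most $(2k{-}1)\,d_{G_{(i_j,S_j)}}(v_j,v_{j+1}) = (2k{-}1)\,d(v_j, v_{j+1}, F)$. Concatenating all pieces and summing via (i) bounds the length of the resulting $s$-$t$-walk in $H^F$ by $(2k{-}1)\,d(s,t,F)$.

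The decomposition is built recursively. The top query $(s,t,F)$ lies in the set $Q$ of \Cref{lem:derand_pivots} because every replacement path has more than $L \ge L/2$ edges, so the lemma returns a replacement path hit by $B$ at some interior pivot $z$; split the query at $z$. For each resulting sub-query $(u,v,F)$: if some rp has at most $L$ edges, declare it a leaf of the recursion; otherwise every rp has more than $L \ge L/2$ edges, $(u,v,F) \in Q$ again, and \Cref{lem:derand_pivots} supplies a further interior pivot at which to split. The leaf endpoints read in order form the sequence $v_0, v_1, \ldots, v_h$, and an induction on the recursion tree preserves (i) because each split substitutes one sub-rp by another rp of the same length.

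The main obstacle is showing that each invocation of \Cref{lem:derand_pivots} really delivers a \emph{new} pivot strictly interior to the current sub-rp, so that the recursion actually shrinks the problem and terminates. This is exactly where the threshold $L/2$ (rather than $L$) built into the pivot set pays off: any sub-rp on more than $L$ edges sits comfortably inside $Q$, which gives enough slack to invoke the hitting guarantee and, together with the construction behind \Cref{lem:derand_pivots}, to extract a pivot distinct from the sub-query endpoints. Termination then follows from $|B|$ being finite, which yields the desired decomposition and completes the argument.
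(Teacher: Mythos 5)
Your proof of the first inequality $d(s,t,F) \le d_{H^F}(s,t)$ is correct and in fact more detailed than the paper's. Your overall strategy for the stretch bound — decompose a replacement path at pivots into hop-short segments, bound each segment via the RPC property and the spanners/DO edges, and sum — is also the same strategy the paper uses. The problem is the step you yourself flag as "the main obstacle": you never actually prove that each invocation of \Cref{lem:derand_pivots} yields a pivot \emph{strictly interior} to the current sub-replacement-path and distinct from its endpoints. That lemma only guarantees that \emph{some} replacement path $P(u,v,F)$ satisfies $B \cap V(P) \neq \emptyset$; the hitting vertex may well be $u$ or $v$ itself (e.g.\ when the endpoint of a sub-query is a pivot $z \in B$ produced by the previous split, which is the generic situation in your recursion). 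In that case the split makes no progress and your recursion need not terminate. Appealing to "$|B|$ being finite" does not rescue this, since without a guarantee of a \emph{new} interior pivot the recursion can return the same split point forever.

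The paper closes exactly this gap with two ingredients you are missing. First, before invoking the hitting guarantee on a violating segment $P[x..y]$ with more than $L$ edges (and in the case where every $x$-$y$ replacement path has at least $L/2$ edges), it \emph{trims} the segment: it takes $u$ to be the second vertex and $v$ the second-to-last vertex of $P[x..y]$, so that $P[u..v]$ still has at least $L-1 \ge L/2$ edges, and applies \Cref{lem:derand_pivots} to $(u,v,F)$; it then argues via a shortest-path/positive-weight argument that the pivot $z$ found on $P''(u,v,F)$ cannot equal $x$ or $y$ (otherwise there would be a shorter $x$-$y$ replacement path). Second, termination is not by counting pivots but by a potential argument with a two-case analysis: if some $x$-$y$ replacement path has fewer than $L/2$ edges, swap it in and the hop-length of that segment at least halves; otherwise the swap of $P''(u,v,F)$ for $P[u..v]$ inserts a genuinely new interior extended pivot and strictly decreases the maximum number of edges between consecutive extended pivots. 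Both swaps preserve total length, so the limit object is still a replacement path whose inter-pivot segments all have at most $L$ edges, after which your summation goes through. Without the trimming trick and the distinctness argument, your decomposition is not established.
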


\subsubsection{Acknowledgements.} 

The authors thank Merav Parter for raising the question of designing distance sensitivity oracles that require only subquadratic space.
\vspace*{.75em}

\begin{minipage}{0.96\textwidth}
\begin{wrapfigure}{l}{3.2cm}
\flushleft
\vspace*{-2.25em}
\hspace*{-1.5em}
	\includegraphics[scale=.125]{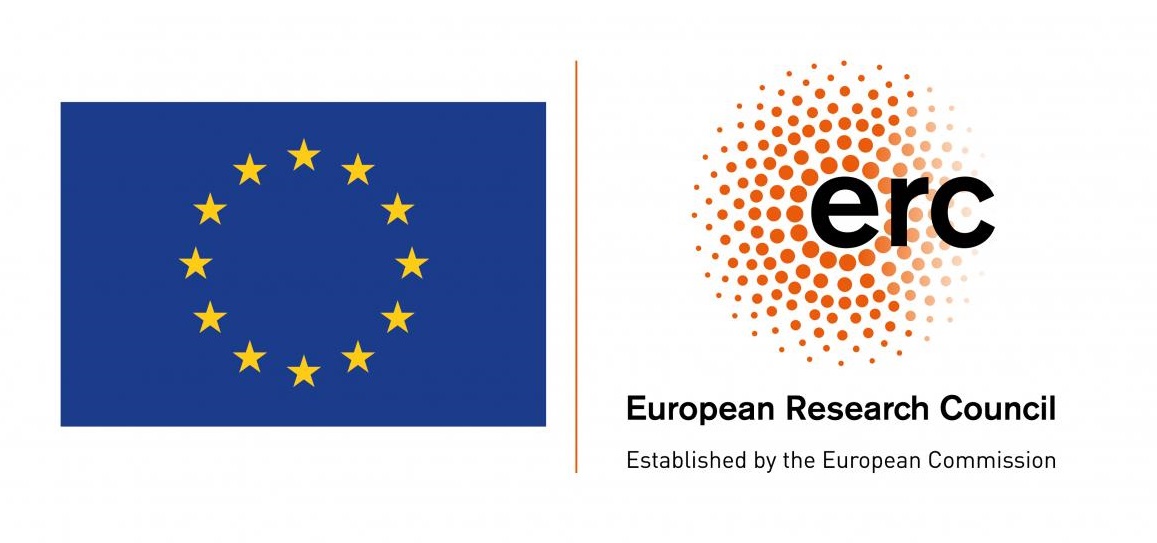}
\end{wrapfigure}

This project received funding from the
European Research Council (ERC) under the European Union's Horizon
2020 research and innovation program (Grant agreement No.~101019564
``The Design of Modern Fully Dynamic Data Structures (MoDynStruct)'').
\end{minipage}

\bibliographystyle{splncs04}
\bibliography{DSO_bib}

\newpage
\appendix

\section{Omitted Proofs}
\label{app:omitted_proofs}

\subsection{Proofs Omitted from \Cref{sec:small_diameter}}

\spannerproxy*

\begin{proof}
	The construction of $S_i$ and $D_i$ is such that $d_{S_i}(s,t) = D_i(s,t)$
	is satisfied for all $s,t \in V$~\cite[Corollary 4.3]{ThorupZ05}.
	The work~\cite{RodittyThorupZwick05DeterministicDO} merely makes the spanner/oracle
	construction deterministic,
	and the query time/size reduction of the oracle in~\cite{Chechik14,Chechik15} only changes
	how the distances are queried/stored, but not the distances themselves.
	Together with the assumption $F \cap E(S_i) = \emptyset$, whence $S_i \subseteq G{-}F$, 
	we get $d(s,t,F) = d_{G-F}(s,t) \le d_{S_i}(s,t) \le D_i(s,t)$.
	
	The other inequality is immediate from $D_i$ being a $(2k{-}1)$-approximate DO.
\end{proof}

\subsection{Proofs Omitted from \Cref{sec:large_diameter}}

\derandpivots*

\begin{proof}
	The algorithm first computes all-pairs shortest paths in all graphs of the covering $\G$
	in such a way that the distances and also the paths themselves can be reported in constant time (per edge)
	and we additionally get access to the number of edges on the paths.
	This can be done by computing a shortest path tree rooted at each vertex in each graph using Dijkstra's algorithm,
	with some simple bookkeeping for the number of edges,
	in time $\Otilde(|\G| mn)$.

	We then build a set $\P$ of paths.
	For each graph $G_i \in \G$ and vertices $u,v \in V$ such that the shortest $u$-$v$-path in $G_i$ 
	computed above has at least $K = L/(4f{+}2)$ edges,
	we add an (arbitrary) $K$-edge subpath to $\P$.
	This takes additional time $O(K|\G|n^2) = O(|\G|Ln^2/f)$.
	Eventually, $\P$ contains $O(|\G|n^2)$ paths each of size $K$.
	The folklore greedy algorithm, i.e.\ always choosing the vertex from $V$ that hits the most unhit paths,
	gives a hitting set $B$ for $\P$ with $|B| = \Otilde(n/K) = \Otilde(fn/L)$ elements
	in time $\Otilde(K|\P|) = \Otilde(|\G|Ln^2/f)$, see e.g.~\cite{AlonChechikCohen19CombinatorialRP,King99FullyDynamicAPSP}.
	The data structure is merely an array indexed by $\G \times \binom{B}{2}$ holding the distances $d_{G_i}(x,y)$.
	
	It is left to prove that the pivots in $B$ hit all the relevant paths.
	Let $s,t \in V$, and $F \subseteq E$ with $|F| \le f$ be such 
	that every shortest path from $s$ to $t$ in $G-F$ has at least $L/2$ edges.
	Recall that in \emph{unweighted} (undirected) graphs, any replacement path decomposes into 
	at most $f+1$ shortest paths in the original graph $G$.
	Afek et al.~\cite[Theorem~2]{Afek02RestorationbyPathConcatenation_journal}
	also gave an analog for \emph{weighted} (undirected) graphs.
	There, one can find a decomposition into $f+1$ shortest paths in $G$
	interleaved by up to $f$ edges.
	For a replacement path on at least $L/2$ edges,
	this means one of its subpaths must have at least $K = \frac{L/2}{(f+1)+f}$ edges.
	Let $P' = P'(s,t,F)$ be a replacement path with the minimum number of edges.
	By the preceding argument, it contains two vertices $u,v \in V(P')$ such that the segment
	$P'[u..v]$ is a shortest path in $G$ with exactly $K$ edges
	and \emph{every} shortest $u$-$v$-path has at least $K$ edges.

	Let $G_i \in \G$ be a subgraph that contains no edge of $F$ and $d_{G_i}(u,v) = d(u,v,F)$,
	such a graph exists because $\G$ is an $(L,f)$-RPC. 
	Let $Q$ be the shortest $u$-$v$-path computed for $G_i$ in the first step.
	It may differ from $P'[u..v]$, but is as the same length $d(u,v,F) = d(u,v)$
	and at least $K$ edges.
	We therefore added (a subpath of) $Q$ to $\P$ and there is a pivot $x \in B \cap V(Q)$.
	The replacement path $P = P'[s..u] \circ Q \circ P'[v..t]$ satisfies the assertion of the lemma.
\end{proof}

The proof of \Cref{lem:query_time_large_diameter} relies on \Cref{lem:spanner_for_pivots} (see below).
We decided to present the proofs in the order in which the lemmas appear in the main part.

\querytimelargediameter*

\begin{proof}
	Let $s,t \in V$ be two vertices and $F \subseteq E$ a set of at most $f$ failures.
	Recall that the DSO computes and answers $\widehat{d} = \min\{\widehat{d_1},\widehat{d_2}\}$,
	where $\widehat{d_1}(s,t,F) = \min_{(i,S) \in \G_F} D_{(i,S)}(s,t)$ and $\widehat{d_2}(s,t,F) = d_{H^F}(s,t)$.
	Both estimates are at least as large as $d(s,t,F)$,
	the first by \Cref{lem:spanner_proxy} and the second one by the first clause of \Cref{lem:spanner_for_pivots}.
	Since $\G$ is an $(L,f)$-replacement path covering, 
	if there is a shortest path from $s$ to $t$ in $G-F$ with at most $L$ edges, then
	$\widehat{d}(s,t,F) \le  \min_{(i,S) \in \G_F} D_{(i,S)}(s,t)
		\le (2k-1) \cdot \min_{(i,S) \in \G_F} d_{G_{(i,S)}}(s,t)
		= (2k-1) \cdot d(s,t,F)$.
	Where the middle inequality follows from $D_{(i,S)}$ being a DO for $G_{(i,S)}$ with stretch $2k-1$.
	However, if all replacement paths $P(s,t,F)$ have more than $L$ edges,
	if could be that none of them survives in any $G_{(i,S)}$, whence $\widehat{d_1}(s,t,F)$ may overestimate $d(s,t,F)$
	by more than a factor $2k-1$.
	In this case, \Cref{lem:spanner_for_pivots} shows that $\widehat{d}(s,t,F) \le \widehat{d_2}(s,t,F) \le (2k{-}1) \, d(s,t,F)$.
	
	We are left to prove the query time.
	We have seen in \Cref{subsec:small_diameter_derand} that the subfamily $\G_F$
	contains $O(fL \nwspace \frac{\log n}{\log L})$ subgraphs
	and can be computed in time $O(f^2 L \nwspace \frac{\log n \log\log n}{\log L})$.
	Within the same bounds, we get the estimate $\widehat{d_1}(s,t,F)$
	as the distance oracles $D_{(i,S)}$ answer in constant time.
	It takes much more time to compute the second estimate.
 	We merge the $O(k |B|^{1+1/k}) = \Otilde(f^{1+1/k} k \nwspace \frac{n^{1+1/k}}{L^{1+1/k}})$ edges of the spanner $T_{(i,S)}$
 	into $H^F$ for each graph in $\G_F$.
	Afterwards, we add $O(|\G_F||B|) = \Otilde(fL \nwspace \frac{\log n}{\log L}  \cdot f\frac{n}{L}) = \Otilde(f^2 n)$ 
	more edges of the types $\{s,x\}$ or $\{x,t\}$ for $x \in B$.
	Eventually, $H^F$ has $\Otilde(f^{2+1/k} k \, \frac{n^{1+1/k}}{L^{1/k}})$ edges,
	and the time to compute $d_{H^F}(s,t)$ differs from that only by logarithmic factors.
	By the assumption $f = o(\log n/\log\log n)$,
	this is $\Otilde(n^{1+1/k}/L^{1/k})$.
\end{proof}

\spannerforpivots*

\begin{proof}
	The first inequality is an easy consequence of the fact that 
	none of the subgraphs in $\G_F$ contains an edge of $F$ by construction, see~\cite{KarthikParter21DeterministicRPC}.
	
	Let $(s,t,F)$ be such that every $s$-$t$-replacement path has more than $L$ edges.
	\autoref{lem:derand_pivots}	guarantees that there is an $s$-$t$-replacement path that contains a pivot from $B$.
	We will use it to show the stronger fact that there is a replacement path
	such that the subpath between consecutive pivots, the one from $s$ to the first pivot, and from the last to $t$,
	all have at most $L$ edges.
	For the sake of this proof, we refer to the elements of $B \cup \{s,t\}$ as the \emph{extended pivots}.
	Suppose $P = P(s,t,F)$ is a replacement path that does not satisfy the above property
	and let $x,y \in V(P)$ be consecutive extended pivots such that $|E(P[x..y])| > L$.
	We distinguish two cases:
	either there is a replacement path $P'(x,y,F)$ with less than $L/2$ edges,
	or all of them have at least $L/2$.
	
	In the first case, we can swap the segment $P[x..y]$ for $P'(x,y,F)$.
	The new path is still a $s$-$t$-replacement path due to both parts having the same weight,
	but the hop distance between $x$ and $y$ is halved.
	For the second case, let $u$ be the second vertex on $P[x..y]$ (the one after $x$) and $v$ the second-to-last.
	The subsegment $P[u..v]$ has at least $L-1 \ge L/2$ edges,
	hence, by \Cref{lem:derand_pivots}, there is a $P''(u,v,F)$ that contains (an actual) pivot $z \in B$.
	Observe that $z$ must be different from both $x$ and $y$
	as otherwise there would be a shorter replacement path between $x$ and $y$.
	Again, swapping $P''(u,v,F)$ for $P[u..v]$ maintains the replacement path property,
	but reduces the number of edges between consecutive extended pivots by at least $1$.
	Iterating this argument gives the desired path.
		
	We use $P$ to for the constructed $s$-$t$-replacement path and
	$x_1, x_2, \dots, x_{j_{\max}}$ its (actual) pivots in the order going from $s$ to $t$.
	Recall that $|E(P[x_{j-1}..x_{j}])| \le L$.
	Therefore, for each consecutive pair $x_{j-1},x_{j}$,
	there is some $(i,S) \in \G_F$ for which the edge $\{x_{j-1},x_{j}\}$ in the dense subgraph $H_{(i,S)}$ had the correct weight 
	$d_{G_{(i,S)}}(x_{j-1},x_{j}) = d(x_{j-1},x_{j},F)$.
	This was stretched by at most a factor $2k-1$ in the spanner $T_{(i,S)}$ 
	and the multigraph $H^F$ inherited this stretch.
	Applying this to all consecutive pairs gives 
	$d_{H^F}(x_1,x_{j_{\max}}) \le (2k{-}1) \, d(x_1,x_{j_{\max}},F)$.
	
	We still have to get from $s$ to the first pivot $x_1$
	(the argument for $x_{j_{\max}}$ and $t$ is the same).
	As shown above, the segment $P[s..x_1]$ also has at most $L$ edges.
	So there is an $(i,S) \in \G_F$ with $D_{(i,S)}(s,x_1) \le (2k-1) \, d_{G_{(i,S)}}(s,x_1) = (2k-1) \, d(s,x_1,F)$.
	The multigraph $H^F$ contains an edge $\{s,x_1\}$ with weight $D_{(i,S)}(s,x_1)$.
	Combining those facts shows that there is some path in $H^F$ from $s$ to $t$
	that has length at most
	\begin{gather*}
		d_{H^F}(s,x_1)+d_{H^F}(x_1,x_{j_{\max}-1})+d_{H^F}(x_{j_{\max}},t)\\
			\quad\quad\ \  \le (2k-1) \cdot \Big((d(s,x_1,F)+ d(x_1,x_{j_{\max}},F) + d(x_{j_{\max}},t,F) \Big)\\
			\quad\quad\ \  = (2k-1) \cdot |P| = (2k-1) \cdot d(s,t,F).
	\end{gather*}
\end{proof}

\end{document}